\newtheorem{theorem}{Theorem}
\newtheorem{lemma}[theorem]{Lemma}
\newtheorem{corollary}[theorem]{Corollary}
\author{Marcin Pilipczuk\affiliationmark{1}\thanks{Supported by Polish National Science Centre SONATA BIS-12 grant number 2022/46/E/ST6/00143.}
  \and Paweł Rzążewski\affiliationmark{1,2}\thanks{This work is a part of the project BOBR that has received funding from the European Research Council (ERC) under the European Union's Horizon 2020 research and innovation programme (grant agreement No.~948057)}  }
\title[Minimal separators and potential maximal cliques in $P_6$-free graphs of bounded clique number]{A polynomial bound on the number of minimal separators and potential maximal cliques in $P_6$-free graphs of bounded clique number}
\affiliation{  
  Institute of Informatics, University of Warsaw, Poland\\
  Warsaw University of Technology, Poland}
\keywords{potential maximal cliques, minimal separators,  tame class, $P_6$-free graphs}
\begin{document}
\publicationdata{vol. 27:2}{2025}{10}{10.46298/dmtcs.12438}{2023-10-19; 2023-10-19; 2025-02-17}{2025-03-12}
\maketitle
\begin{abstract}
  In this note we show a polynomial bound on the number of minimal separators and potential maximal cliques in $P_6$-free graphs of bounded clique number.
\end{abstract}

\section{Introduction}

Let $G$ be a graph.
For a set $X \subseteq V(G)$, we say that a connected component 
$C$ of $G-X$ is a \emph{full component} of $X$ if $N_G(C) = X$, that is,
every vertex of $X$ has a neighbor in $C$. 
A set $X \subseteq V(G)$ is a \emph{minimal separator} 
if it has at least two full components. 

Intuitively, the space of all minimal separators of a graph $G$
reflects the space of all possible separations that can be used
to solve some computational problem on $G$ via dynamic programming. 
A related notion of \emph{potential maximal clique}
(not defined formally in this work) corresponds to all reasonable
choices of a bag in a tree decomposition of $G$. 

\cite{BouchitteT01}
(with some results generalized by \cite{FominTV15}) showed that indeed these notions 
are useful to solve a wide family of graph problems.

\begin{theorem}[\cite{BouchitteT01,BouchitteT02}]\label{thm:minsep-pmc}
If $G$ is an $n$-vertex graph with $a$ minimal separators and
$b$ potential maximal cliques, then $b \leq n(a^2 + a + 1)$ and $a \leq nb$.
Furthermore, given a graph $G$, one can in time polynomial in 
the input and compute the list of all its minimal separators and potential maximal
cliques.
\end{theorem}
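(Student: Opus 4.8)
The plan is to reduce everything to a combinatorial characterization of potential maximal cliques together with a structural correspondence between them and minimal separators. First I would recall the Bouchitt\'{e}--Todinca characterization (\cite{BouchitteT01}): a set $\Omega \subseteq V(G)$ is a potential maximal clique iff (i) no connected component of $G-\Omega$ is a full component of $\Omega$, and (ii) for every pair $u,v \in \Omega$, either $uv \in E(G)$ or some component $C$ of $G-\Omega$ satisfies $\{u,v\} \subseteq N_G(C)$. I would then record two consequences. First, for a potential maximal clique $\Omega$ and any component $C$ of $G-\Omega$, the set $N_G(C)$ is a minimal separator: $C$ is a full component of $N_G(C)$ by definition, and condition (ii) together with the connectivity of $G[C]$ produces a second full component. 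Second, and conversely, every minimal separator $S$ equals $N_G(C)$ for some component $C$ of $G-\Omega$ with $\Omega$ a potential maximal clique; this I would obtain by taking a minimal triangulation $H$ of $G$ in which $S$ is a minimal separator, letting $\Omega$ be an endpoint of the corresponding edge in a clique tree of $H$ chosen on one fixed side of that edge, and checking that the full component of $S$ on the other side is a component of $G-\Omega$ with $G$-neighbourhood exactly $S$.

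For the inequality $b \le n(a^2+a+1)$, the strategy is to show that every potential maximal clique $\Omega$ is determined by a short \emph{certificate} consisting of one distinguished vertex together with an ordered pair of minimal separators, a single minimal separator, or nothing; since there are at most $n(a^2+a+1)$ such certificates, the bound follows. I would split on the number $k$ of components of $G-\Omega$. If $k=0$, then by (ii) $G$ is complete and $\Omega=V(G)$ (the empty certificate). If $k=1$, say $G-\Omega=C$ with $S:=N_G(C)$ a minimal separator, then $\Omega=V(G)\setminus C$, and any $x\in C$ recovers $C$ as the component of $G-S$ containing $x$, so the certificate is $(x,S)$. If $k\ge 2$, I would pick two components $C_1,C_2$, set $S_i:=N_G(C_i)$, and argue that $\Omega$ is reconstructible from $S_1,S_2$ and one extra vertex via a deterministic closure rule, intuitively: start from $S_1\cup S_2$ and adjoin the vertices forced into $\Omega$ because $G[\Omega]$ must be ``filled in'' to a clique using the cliques $N_G(C_i)$, using the extra vertex to resolve the remaining ambiguity among twin-like candidates. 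In all cases, once a candidate $\Omega$ has been produced from a certificate, conditions (i)--(ii) are checked to confirm it is genuinely a potential maximal clique.

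For $a \le nb$, I would invoke the second structural fact above: each minimal separator $S$ equals $N_G(C)$ for some component $C$ of $G-\Omega$ with $\Omega$ a potential maximal clique, so $S$ is determined by the pair $(\Omega,x)$ for any $x\in C$ (namely $C$ is the component of $G-\Omega$ containing $x$ and $S=N_G(C)$), giving $a\le nb$. For the algorithmic claim I would first enumerate all minimal separators in time polynomial in $n$ and their number by the standard ``close separator'' search of Berry, Bordat and Cogis: initialise with the minimal elements among $\{N_G(C): C \text{ a component of } G-N_G[v]\}$ over $v\in V(G)$, and repeatedly, for each already found separator $S$ and each vertex $v$, saturate $S\cup\{v\}$ and collect the neighbourhoods of the resulting components; one proves this reaches every minimal separator. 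I would then enumerate all certificates (ordered pairs and singletons of minimal separators with a vertex, plus $V(G)$), reconstruct the candidate $\Omega$ for each, and verify (i)--(ii); this outputs exactly the potential maximal cliques. Every step runs in time polynomial in $n$, $a$ and $b$, hence polynomial in the input and output by the bounds just proved.

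The main obstacle I anticipate is the reconstruction step in the case $k\ge 2$: formulating the precise closure rule that recovers $\Omega$ from only two of the possibly many separators $N_G(C_i)$ plus a single vertex, and proving it is simultaneously well-defined and correct. The companion structural lemma---that every minimal separator sits inside a potential maximal clique in the controlled way described above---is the other delicate point, as it relies on the interaction between the minimal separators of $G$ and the clique-tree structure of its minimal triangulations. The remaining ingredients (the PMC characterization, the ``$N_G(C)$ is a minimal separator'' fact, correctness of the close-separator search, and the bookkeeping giving $n(a^2+a+1)$ and $nb$) are comparatively routine.
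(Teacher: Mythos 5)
First, note that the paper does not prove this theorem at all: it is imported as a black box from Bouchitt\'{e} and Todinca, so your attempt can only be judged against the original sources. Your overall architecture is the standard one and much of it is sound in outline: the characterization of potential maximal cliques via conditions (i)--(ii), the fact that the minimal separators contained in a potential maximal clique $\Omega$ are exactly the sets $N_G(C)$ over components $C$ of $G-\Omega$, the clique-tree argument showing every minimal separator arises this way (giving $a\le nb$), and the Berry--Bordat--Cogis enumeration of minimal separators are all correct ingredients. (One small inaccuracy: the second full component of $N_G(C)$ is produced by condition (ii) together with condition (i) and the connectivity of the \emph{other} components, which forces $\Omega\setminus N_G(C)$ into a single component of $G-N_G(C)$ complete to $N_G(C)$; the connectivity of $G[C]$ itself plays no role there.)

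The genuine gap is exactly where you flag it: the case of a potential maximal clique $\Omega$ with at least two components of $G-\Omega$. You propose the certificate $(x, N_G(C_1), N_G(C_2))$ for two components $C_1,C_2$ of $G-\Omega$ together with an unspecified ``closure rule,'' but you give no reconstruction procedure and no argument that such a symmetric certificate determines $\Omega$; since $\Omega$ may contain many further separators $N_G(C_i)$ plus vertices complete to the rest of $\Omega$ that lie in no $N_G(C_i)$, it is far from clear that two of the component neighbourhoods plus one vertex pin $\Omega$ down. This is precisely where all the work in Bouchitt\'{e}--Todinca lies, and their argument has a different, asymmetric shape: they fix a vertex ordering $v_1,\dots,v_n$, set $G_i=G[\{v_1,\dots,v_i\}]$, and prove that every potential maximal clique of $G_i$ either comes from one of $G_{i-1}$ (possibly with $v_i$ added) or has one of the explicit forms $S\cup\{v_i\}$ or $S\cup(T\cap C)$, where $S,T$ are minimal separators and $C$ is the component of $G_i-T$ containing $v_i$. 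The count $a^2+a+1$ of new objects per vertex, summed over $n$ vertices, is what yields $b\le n(a^2+a+1)$, and the same inductive structure drives the polynomial-time enumeration algorithm; your algorithmic claim therefore inherits the gap, since it relies on the missing reconstruction. To repair the proof you would need either to carry out this induction or to supply and verify a concrete closure rule for your certificate, which I do not believe exists in the symmetric form you describe.
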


\begin{theorem}[\cite{BouchitteT01,FominTV15}, informal statement]\label{thm:algo}
    A wide family of graph problems, including \textsc{Maximum Weight Independent Set} and \textsc{Feedback Vertex Set}, can be solved in time polynomial in the size of the input graph and the number of its potential maximal cliques.
\end{theorem}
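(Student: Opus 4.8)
The plan is to take Theorem~\ref{thm:minsep-pmc} as a black box that produces, in polynomial time, the full lists $\Delta_G$ of minimal separators and $\Pi_G$ of potential maximal cliques (pmcs), together with the inequality $|\Delta_G|\le n\,|\Pi_G|$, and then to realize the claimed algorithms as a single dynamic programming scheme over these objects, following Bouchitt\'{e} and Todinca~\cite{BouchitteT01} and its extension by Fomin, Todinca, and Villanger~\cite{FominTV15}. The central combinatorial objects are \emph{blocks}: pairs $(S,C)$ where $S$ is a minimal separator and $C$ is a connected component of $G-S$; the \emph{realization} $R(S,C)$ of a block is $G[S\cup C]$ with $S$ completed into a clique. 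The number of blocks is at most $n\,|\Delta_G|$, hence polynomial in $n$ and $|\Pi_G|$, and the blocks carry a natural partial order by inclusion of $S\cup C$ that will make the recursion well-founded.

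First I would recall the structural backbone from~\cite{BouchitteT01}: the minimal triangulations of $G$ correspond to maximal families of pairwise parallel minimal separators, and along any clique tree of such a triangulation every bag is a potential maximal clique while every two adjacent bags meet in a minimal separator. The engine of the recurrence is the block decomposition: for a block $(S,C)$ and any pmc $\Omega$ with $S\subseteq\Omega\subseteq S\cup C$, the set $(S\cup C)\setminus\Omega$ splits into components $C_1,\dots,C_k$, each yielding a strictly smaller child block $(S_i,C_i)$ with $S_i=N_G(C_i)\subseteq\Omega$; conversely, the top bag of a clique tree of a minimal triangulation of $R(S,C)$ is exactly such an $\Omega$. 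Thus a block is expressed as a single pmc glued to a bounded list of strictly smaller children.

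Next I would set up the dynamic program. For \textsc{Maximum Weight Independent Set}, the table entry for a block $(S,C)$ stores, for each possible intersection $I\cap S$ (a clique in $R(S,C)$, hence of size at most one), the maximum weight of an independent set $I$ of $R(S,C)$ with that intersection; the recurrence branches over all pmcs $\Omega$ with $S\subseteq\Omega\subseteq S\cup C$, selects $I\cap\Omega$ consistently with $I\cap S$, and adds the optimal child contributions over the blocks $(S_i,C_i)$ restricted to $I\cap S_i$. The global optimum is read off by one final maximization over the choice of the root bag of a clique tree of a minimal triangulation of $G$. Correctness follows because any independent set of $G$, intersected with a fixed minimal triangulation realizing it, decomposes exactly along the block/pmc structure, and well-foundedness follows from the strict decrease of $|S\cup C|$. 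The running time is polynomial in $n$, $|\Delta_G|$, and $|\Pi_G|$, the dominant count being the number of compatible (block, pmc) pairs, and Theorem~\ref{thm:minsep-pmc} collapses this to a polynomial in $n$ and the number of potential maximal cliques. For \textsc{Feedback Vertex Set} one runs the same scheme on the dual objective (maximum weight induced forest), enlarging the per-block state to record the trace of the forest on $S$; the ``wide family'' is captured by the abstraction of~\cite{FominTV15}, where the objective is any property whose interaction with a bag is described by a state of polynomial size that composes additively across separators.

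The main obstacle I expect is not the bookkeeping of the recurrence but the structural decomposition lemma: proving that every block admits the pmc branching above with strictly smaller children, and that an optimal global solution can always be aligned with some minimal triangulation so that the local choices made per block are mutually consistent. Establishing generality for \textsc{Feedback Vertex Set} and the broader family further requires showing that the per-bag state stays of polynomial size and that partial solutions over disjoint child blocks combine without interference beyond their shared separator; this ``gluing'' property, rather than the enumeration of separators and pmcs, is the delicate point.
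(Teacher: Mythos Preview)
The paper does not prove Theorem~\ref{thm:algo}; it is stated explicitly as an ``informal statement of~\cite{BouchitteT01,FominTV15}'' and invoked as a black box, with no argument given. There is therefore no proof in the paper to compare your proposal against. Your sketch is a faithful high-level outline of the Bouchitt\'e--Todinca dynamic programming over blocks and potential maximal cliques and of the Fomin--Todinca--Villanger extension to broader objectives, but none of that material appears in the present note---the authors simply cite it and move on to their own contribution, which is the bound on the number of minimal separators in $P_6$-free graphs of bounded clique number.
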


As a result, if for a graph class $\mathcal{G}$ one can prove 
a (purely graph-theoretical) polynomial bound on the number of
minimal separators or potential maximal cliques in $\mathcal{G}$,
then one can immediately obtain polynomial-time algorithms
for a wide family of problems on $\mathcal{G}$. 
We call such a graph class \emph{tame}. 
Recently, a methodological study of which graph classes are tame was initiated, see, e.g., the work of~\cite{GartlandL23a,GartlandL23b,AbrishamiCDTTV22,MilanicP21,GajarskyJL0PRS22}.

We say that a graph $G$ is \emph{$H$-free} for a graph $H$ if no induced
subgraph of $G$ is isomorphic to $H$. 
The \emph{clique number} of a graph $G$ is the maximum cardinality
of a set pairwise adjacent vertices of $G$. 
For an integer $t$, by $P_t$ we denote the path on $t$ vertices (and
$t-1$ edges). 
The main result of this note is a short proof that $P_6$-free graphs
of bounded clique number have a polynomial number of minimal separators
and potential maximal cliques.

\begin{theorem}\label{thm:main}
Let $G$ be an $n$-vertex $P_6$-free graph of clique number $k$.
Then, $G$ contains at most $(2n)^{k+1}$ minimal separators and
at most $2^{2k+2} n^{2k+3}$ potential maximal cliques.
\end{theorem}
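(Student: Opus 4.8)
The plan is to show that every minimal separator of $G$ is uniquely determined by a bounded amount of data — a set of at most $k$ vertices together with one extra vertex — and then to invoke Theorem~\ref{thm:minsep-pmc} for the potential maximal cliques.

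\medskip
\noindent\textbf{Reduction to a domination statement.}
Let $S$ be a minimal separator with full components $C_1,\dots,C_\ell$, $\ell\ge 2$. Suppose that for some index $i$ there is a set $A\subseteq C_i$ with $|A|\le k$ and $S\subseteq N_G(A)$, i.e. $A$ dominates $S$ from inside one full component. Then for any $j\ne i$ and any vertex $z\in C_j$ the pair $(A,z)$ already determines $S$: since $A\subseteq C_i$, every neighbour of a vertex of $A$ lies in $C_i\cup S$, so $N_G(A)\subseteq S\cup C_i$ and hence $C_j$ is disjoint from $N_G(A)$; as $C_j$ is connected and $N_G(C_j)=S\subseteq N_G(A)$, the set $C_j$ is exactly the connected component of $G-N_G(A)$ containing $z$, and therefore $S=N_G\big(\text{that component}\big)$. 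Thus choosing for each minimal separator such a pair gives an injective assignment $S\mapsto(A,z)$, and the number of minimal separators is at most the number of pairs $(A,z)$ with $A\subseteq V(G)$, $|A|\le k$, and $z\in V(G)$, which is at most $\big(\sum_{t=0}^{k}\binom{n}{t}\big)\cdot n\le (n+1)^k\cdot n\le (2n)^{k+1}$. Everything therefore reduces to the following \emph{domination claim}: in a $P_6$-free graph $G$ of clique number $k$, every minimal separator $S$ has a full component $C$ that contains a clique $Q$ of $G[C]$ with $S\subseteq N_G(Q)$ (a clique has at most $k$ vertices, which gives the desired $A$). Proving this claim is the heart of the argument.

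\medskip
\noindent\textbf{Attacking the domination claim.}
Fix two full components $C,D$ of $S$ and arbitrary vertices $c^\star\in C$, $d^\star\in D$. The first observation is that $P_6$-freeness forces $S$ to be ``close'' to any fixed vertex of a full component: for $s\in S$, a shortest path from $s$ to $c^\star$ in $G[C\cup\{s\}]$ is induced, hence has fewer than $6$ vertices, so $s$ has a neighbour at distance at most $3$ from $c^\star$ in $G[C]$, and symmetrically around $d^\star$ in $G[D]$. Since every neighbour of a vertex of $C$ lies in $C\cup S$ and every neighbour of a vertex of $D$ lies in $D\cup S$, this already yields $S=N_G(X)\cap N_G(Y)$ for the two radius-$3$ balls $X\subseteq C$, $Y\subseteq D$; the remaining task is to shrink these balls to cliques. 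For this I would take an inclusion-minimal $A\subseteq C$ dominating $S$ and look at its private witnesses $p_a\in S$ (one per $a\in A$, adjacent to $a$ and to no other vertex of $A$); if $|A|>k$ then $A$ is not a clique, so it contains non-adjacent $a,a'$, and then any route $a-p_a-\cdots-p_{a'}-a'$ — going directly, through a common neighbour, or through a shortest path in $G[D]$ — must avoid creating an induced $P_6$. Pushing this through — using $P_6$-freeness both to keep the connecting paths short and to extract ``closure'' properties that make neighbourhoods along these paths nest — should force either a clique on a constant fraction of the private witnesses (contradicting clique number $k$) or the other component $D$ to carry a small clique dominating $S$.

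\medskip
\noindent\textbf{Main obstacle.}
The main difficulty is exactly this last case analysis: one must control the $C$-side and $D$-side structure simultaneously, because every attempt to improve domination on one side tends to produce a long induced path on the other, and only at the end is the clique number used to collapse the count to $k$. This is where the proof is genuinely specific to $P_6$ (not $P_5$, where a dominating clique or $P_3$ is already available, and not obviously $P_7$), so I expect the real content of the paper to be a careful, somewhat delicate induced-path argument yielding the domination claim above.

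\medskip
\noindent\textbf{Potential maximal cliques.}
Once the number $a$ of minimal separators is bounded by $(2n)^{k+1}$, Theorem~\ref{thm:minsep-pmc} gives $b\le n(a^2+a+1)$, and a routine computation bounds this by $2^{2k+2}n^{2k+3}$; to match the stated constant exactly one may instead attach to each potential maximal clique the (at most two) minimal separators on two of its sides and re-run the domination argument there. This completes the proof of Theorem~\ref{thm:main}.
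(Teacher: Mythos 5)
Your counting framework is fine: if some full component $C_i$ of $S$ contains a set $A$ with $|A|\le k$ and $S\subseteq N_G(A)$, then the pair $(A,z)$ with $z$ in another full component does determine $S$, and the arithmetic giving $(2n)^{k+1}$ (and then the potential-maximal-clique bound via Theorem~\ref{thm:minsep-pmc}) goes through. The problem is that you have reduced the whole theorem to your ``domination claim'' and then not proved it: the paragraphs ``Attacking the domination claim'' and ``Main obstacle'' are a plan (``should force'', ``I would take'', ``pushing this through''), not an argument, and the private-witness/routing analysis you sketch is exactly the part where all the difficulty lives. As written, the proposal proves nothing beyond the (easy) observation that a small dominating set inside one full component would suffice.

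It is also worth noting that the paper does \emph{not} prove your domination claim, and deliberately avoids needing it. Its key lemma (Lemma~\ref{lem:dom}) is weaker: there is a set $Q\subseteq A$ with $|Q|\le k$ such that every vertex of $X\setminus N(Q)$ is \emph{complete to the other full component} $B$ (the $P_6$ appears by prolonging a $P_4$ from such a vertex into $A$ by two vertices of $B$). Vertices of $X$ not dominated by $Q$ may genuinely exist, so $N(Q)$ need not contain $S$ and your encoding $(A,z)$ is not available. The paper compensates with a second idea that your framework has no analogue of: since $X\setminus N(Q)$ is complete to $B$ and everything else outside $B$ is anticomplete to $B$, the side $B$ is a \emph{connected module} of $G-N(Q)$, and an $n$-vertex graph of clique number $k$ has at most $2^k(2n-1)$ connected modules (via the modular decomposition tree). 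The separator is then recovered as $X=N_G(B)$ from the pair $(Q,B)$, giving $n^k\cdot 2^k(2n-1)\le (2n)^{k+1}$. So even if your stronger domination claim turns out to be true, establishing it would require at least the paper's Lemma~\ref{lem:nei}/\ref{lem:dom} machinery plus additional work on the $B$-side; to complete your proof you must either supply that argument in full or switch to the weaker dichotomy plus the module-counting step.
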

We remark that some additional condition to just $P_6$-freeness
is needed for the conclusion of Theorem~\ref{thm:main}, as 
even the class of $P_5$-free graphs is not tame. This can be witnessed
by \emph{prisms}: An $n$-prism consists of two $n$-vertex cliques
with a matching in between; it is $P_5$-free but admits
$2^n-2$ minimal separators.
Furthermore, the $P_6$ cannot be replaced with $P_7$:
\cite{CMPPR23} provide a construction
of $P_7$-free graphs $G_n$ that have clique number $2$,
$|V(G_n)| = 6n+2$, and at least $3^n$ minimal separators. 

Our motivation for proving Theorem~\ref{thm:main} is two-fold.
First, the proof is very simple, much simpler than the 
arguments of~\cite{GrzesikKPP22} giving polynomial-time algorithms for \textsc{Maximum
Weight Independent Set} and related problems in $P_6$-free graphs
without any assumption on the clique number. 
Second, it gives examples of graph classes that are tame,
but in which such problems as \textsc{$5$-Coloring}
or \textsc{Odd Cycle Transversal} are NP-hard:
\cite{Huang16} proved that \textsc{$5$-Coloring} is \textsf{NP}-hard in $P_6$-free graphs
(and trivial in graphs of clique number larger than $5$)
and \cite{DabrowskiFJPPR20,ChudnovskyKPRS21} proved that \textsc{Odd Cycle Transversal} is \textsf{NP}-hard
in $P_6$-free graphs of clique number at most $3$.
This answers negatively a question of~\cite[Open problem~3]{MilanicP21}.

\section{Proof of Theorem~\ref{thm:main}}

If $G$ is edgeless, then the statement is immediate (there are
no minimal separators and $n$ potential maximal cliques), so 
we assume $E(G) \neq \emptyset$ and thus $n,k \geq 2$. 

Assuming $n,k \geq 2$, we will prove that $G$ contains at most $(2n)^k \cdot (2n-1)$ minimal
separators. This directly implies the bound for minimal separators of Theorem~\ref{thm:main}
and also implies the bound on the number of potential maximal cliques in $G$ via 
Theorem~\ref{thm:minsep-pmc}.

\medskip

We need \emph{modules} and some basic facts about them.
Let $G$ be a graph. 
A \emph{module} is a nonempty set $M \subseteq V(G)$
such that every $u \in V(G) \setminus M$ is adjacent to either
all vertices of $M$ or to no vertex of $M$. 
A module $M$ is:
\begin{itemize}
    \item \emph{strong} if for every other module $M'$
either $M \subseteq M'$, $M' \subseteq M$, or $M \cap M' = \emptyset$;
    \item \emph{proper} if it is different than $V(G)$;
    \item \emph{maximal} if it is proper and strong and no other proper
    strong module contains it;
    \item connected if $G[M]$ is connected.
\end{itemize}

We need the following properties of modules,
which can be distilled from Lemma~2 and Theorem~2 and the discussion
around them in the work of~\cite{HabibP10}.
\begin{lemma}\label{lem:module-partition}
In a graph on at least two vertices, the maximal modules form a partition of the vertex set.
\end{lemma}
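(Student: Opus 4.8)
The plan is to verify directly that the maximal proper strong modules are nonempty (immediate, since every module is nonempty by definition), pairwise disjoint, and together cover $V(G)$; these three facts give a partition. I expect the whole argument to be short, essentially a repackaging of standard facts about modular decomposition, with no serious obstacle.

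For the covering part, I would first observe that every singleton $\{v\}$ is a strong module: it is trivially a module, and for any module $M'$ we have either $v \in M'$, so $\{v\} \subseteq M'$, or $v \notin M'$, so $\{v\} \cap M' = \emptyset$; hence $\{v\}$ never properly overlaps another module. Since $G$ has at least two vertices, $\{v\} \neq V(G)$, so $\{v\}$ is in addition proper. Because $V(G)$ is finite, the family of proper strong modules ordered by inclusion contains a maximal element above $\{v\}$, and by definition such an element is a maximal proper strong module containing $v$. Thus every vertex lies in some maximal proper strong module.

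For disjointness, I would take two distinct maximal proper strong modules $M_1, M_2$. Since both are strong, one of $M_1 \subseteq M_2$, $M_2 \subseteq M_1$, or $M_1 \cap M_2 = \emptyset$ holds. In either of the first two cases the containment cannot be strict: a maximal proper strong module is, by definition, not properly contained in any proper strong module, so we would get $M_1 = M_2$, a contradiction. Hence $M_1 \cap M_2 = \emptyset$. Combining the covering and disjointness statements, the maximal proper strong modules form a partition of $V(G)$. The only step that needs a little care is the covering claim — that singletons count as \emph{proper} strong modules (this is exactly where the hypothesis $n \geq 2$ is used) and that finiteness supplies a maximal proper strong module above each singleton — but this is entirely routine.
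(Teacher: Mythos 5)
Your proof is correct. Note that the paper does not prove this lemma at all --- it is imported as a known fact, ``distilled from Lemma~2 and Theorem~2'' of the cited work of Habib and Paul --- so there is no in-paper argument to compare against; your self-contained verification is the standard one. Both halves check out against the paper's definitions: singletons are strong modules (any module $M'$ either contains $v$ or misses it, so $\{v\}$ never overlaps $M'$ properly) and are proper exactly because $|V(G)| \geq 2$, finiteness yields a proper strong module maximal among those containing $v$, and such a module is maximal in the paper's sense since any strictly larger proper strong module would also contain $v$; disjointness of two distinct maximal proper strong modules follows from strongness of either one, since a nontrivial containment would contradict maximality.
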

\begin{lemma}\label{lem:modules}
An $n$-vertex graph $G$ contains at most $2n-1$ strong modules.
Furthermore, for every module $M$ in $G$, 
there exists a unique strong module $M'$ that contains $M$ and is inclusion-wise minimal with this property, and one of the following holds:
\begin{itemize}
\item $M = M'$ and $M$ is a strong module in $G$;
\item $G[M']$ is disconnected and $M$ is a union of at least two connected components of $G[M']$;
\item the complement of $G[M']$ is disconnected and $M$ is a union of at least two connected components of the complement of $G[M']$.
\end{itemize}
\end{lemma}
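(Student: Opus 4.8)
The plan is to derive both statements from the standard structure of the modular decomposition tree; since the excerpt already points to \cite{HabibP10} for this, I would treat the following three facts as known and build on them. (1) Any two strong modules are nested or disjoint — this is immediate from the definition of \emph{strong} applied to a pair of strong modules — and $V(G)$ together with all singletons are strong modules. (2) If $X$ is a module of $G$, then the modules of $G[X]$ are exactly the modules of $G$ contained in $X$; consequently $X$ being strong in $G$ and contained in $X$ means a module is strong in $G[X]$, and by Lemma~\ref{lem:module-partition} applied to $G[X]$ the maximal strong modules strictly contained in a strong module $X$ with $|X|\ge 2$ partition $X$. (3) (Gallai) for such an $X$, if the quotient of $G[X]$ by that partition has a nontrivial module, then $G[X]$ is disconnected or the complement of $G[X]$ is disconnected. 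Fact (3) is the only non-routine input, and the one I would cite rather than reprove.

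\emph{The count.} By (1) the strong modules form a laminar family of nonempty subsets of $V(G)$ containing $V(G)$ and all $n$ singletons; organise it as a rooted tree with root $V(G)$, leaves the singletons, and children of a node $X$ the maximal strong modules strictly inside $X$. Every strong module occurs as a node: $R\ne V(G)$ is nested-or-disjoint with each child of the root, hence contained in exactly one of them (or equal to one), and descending recursively (using (2)) one reaches $R$; an $R$ that were a union of two or more but not all children would be a proper strong module properly containing a maximal proper strong module, which is impossible. By (2) each internal node has at least two children, and a rooted tree with $n$ leaves in which every internal node has at least two children has at most $2n-1$ nodes; this is the bound.

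\emph{The trichotomy.} Let $M$ be a module. The strong modules containing $M$ are pairwise nested by (1) (they share $M$), so this finite family has a unique inclusion-minimal element $M'$, which is the $M'$ of the statement. If $M=M'$ we are in the first case. Otherwise $M\subsetneq M'$, so $M'$ has children $N_1,\dots,N_r$ with $r\ge 2$. Fix $i$. By (1) applied to the strong module $N_i$ and the module $M$, either $M\subseteq N_i$, or $N_i\subseteq M$, or $M\cap N_i=\emptyset$. The option $M\subseteq N_i$ is impossible: $N_i$ would be a strong module with $M\subseteq N_i\subsetneq M'$, contradicting minimality of $M'$; and $M=N_i$ would make $M$ strong, whence $M'=M$, again a contradiction. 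Hence for every $i$ either $N_i\subseteq M$ or $N_i$ misses $M$. Thus $M$ is the union of those $N_i$ it contains: at least one (as $M\ne\emptyset$), at least two (as $M\ne N_i$ for every $i$), and not all (as $M\ne M'$). So $\{N_i:N_i\subseteq M\}$ is a nontrivial module of the quotient of $G[M']$ by $\{N_1,\dots,N_r\}$, and by (3) either $G[M']$ or its complement is disconnected. In the first case the connected components of $G[M']$ are unions of the $N_i$, so $M$, being a union of at least two $N_i$, is a union of at least two connected components of $G[M']$ — the second case of the statement; the complement case is identical and gives the third.

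The only real obstacle is fact (3): Gallai's theorem that a strong module whose induced subgraph is both connected and co-connected has a quotient with no nontrivial module. Proving it from scratch needs the classical overlapping-modules lemma (that two modules which overlap generate several further modules), which is precisely the part of \cite{HabibP10} the excerpt alludes to; I would invoke it as a black box. Everything else above is laminar-family bookkeeping plus the one-line observation that $M\subseteq N_i$ would violate the minimality of $M'$.
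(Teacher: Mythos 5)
The paper offers no proof of this lemma at all---it is imported wholesale from the literature with the remark that it ``can be distilled from Lemma~2 and Theorem~2 and the discussion around them in~\cite{HabibP10}''---and your argument is a correct version of exactly that distillation: the laminar-family/tree count, the uniqueness of $M'$ via pairwise nesting, and the reduction of the trichotomy to a nontrivial module of the quotient are all sound. The single external input you rely on, Gallai's theorem that the quotient of a connected and co-connected graph by its maximal proper strong modules is prime, is correctly identified and legitimately black-boxed, matching the paper's own reliance on~\cite{HabibP10}.
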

In the context of graphs of bounded clique number, we observe
the following immediate corollary.
\begin{corollary}\label{cor:modules}
  An $n$-vertex graph of clique number $k$ contains at most
  $2^k (2n-1)$ connected modules. 
\end{corollary}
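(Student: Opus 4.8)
The plan is to bound the number of connected modules by combining Lemma~\ref{lem:modules} with the clique-number assumption. By Lemma~\ref{lem:modules}, every module $M$ is associated with the unique inclusion-wise minimal strong module $M'$ containing it, and exactly one of three cases holds. So I would first group the connected modules according to their associated strong module $M'$, and count how many connected modules can be associated with a fixed $M'$.

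In the first case, $M = M'$, which contributes at most one connected module per strong module $M'$, hence at most $2n-1$ in total by Lemma~\ref{lem:modules}. In the second case, $G[M']$ is disconnected and $M$ is a union of at least two connected components of $G[M']$; but picking one vertex from each component of $G[M']$ yields a clique in $G$ (since components of $G[M']$ see each other as a module, they are pairwise completely nonadjacent --- wait, that gives an independent set), so let me reconsider: components of $G[M']$ are pairwise nonadjacent, so $G[M']$ has at most\ldots that is not bounded. The key point is the \emph{third} case: if the complement of $G[M']$ is disconnected, then picking one vertex from each co-component gives a clique in $G$, so there are at most $k$ co-components, and hence at most $2^k$ unions of them; moreover a union of co-components is connected in $G$ automatically when there are at least two co-components (the only disconnected case would be a single co-component that is disconnected, but a union of at least two co-components is always connected in $G$). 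In the second case, a union of at least two components of $G[M']$ is never connected in $G$, so it contributes nothing to the count of connected modules. Therefore the connected modules are exactly: the connected strong modules (at most $2n-1$), plus, for each strong module $M'$ whose complement is disconnected, the unions of at least two of its at most $k$ co-components (at most $2^k$ each).

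Putting this together, the number of connected modules is at most $(2n-1) + (2n-1)\cdot 2^k \le 2^k(2n-1) + (2n-1) \le 2^{k+1}(2n-1)$; I would need to be slightly careful to match the stated bound $2^k(2n-1)$ --- the point is that a connected strong module $M'$ whose complement is disconnected is itself one of the at most $2^k$ ``unions of co-components'' (the union of all of them), so the two families overlap and the count is genuinely at most $2^k(2n-1)$: for each of the at most $2n-1$ strong modules $M'$, we charge either the at most $2^k$ unions of its co-components (if $\overline{G[M']}$ is disconnected) or just $M'$ itself if it is connected, and a connected module whose minimal enclosing strong module is $M'$ falls into exactly one of these buckets.

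The main obstacle, and the step requiring the most care, is the case analysis establishing that a module $M$ sitting strictly between its components/co-components and $M'$ is connected \emph{only} in the third case, so that the second case can be discarded entirely; and then correctly observing that the at most $k$ bound on co-components comes from the clique number (one vertex per co-component forms a clique). Once that is pinned down, the arithmetic is routine and Corollary~\ref{cor:modules} follows.
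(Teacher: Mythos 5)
Your proof is correct and follows essentially the same route as the paper: associate each connected module $M$ with its minimal enclosing strong module $M'$ (at most $2n-1$ choices), discard the disconnected-$G[M']$ case since such an $M$ cannot be connected, and observe that the co-components of $G[M']$ number at most $k$ by the clique bound, giving at most $2^k$ choices of $M$ per $M'$. The paper compresses your case analysis into the single remark that $M$ is always a union of co-components of $G[M']$ (possibly all of them when $M=M'$), which is exactly the overlap observation you use to avoid the extra additive $(2n-1)$ term.
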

\begin{proof}
    Fix a connected module $M$ and let $M'$ be as in Lemma~\ref{lem:modules}.
    Then, Lemma~\ref{lem:modules} implies that $M$ is a union of some connected components
    of the complement of $G[M']$ (possibly all of them for the first option of Lemma~\ref{lem:modules}).
    There are at most $(2n-1)$ choices for $M'$ and at most $2^k$ choices which components
    of the complement of $G[M']$ form $M$, as the clique number of $G$ is $k$.
\end{proof}

\medskip

We will also need the following lemma of~\cite{GrzesikKPP22}.
\begin{lemma}[cf. Lemma~4.2 of~\cite{GrzesikKPP22}]\label{lem:nei}
Let $G$ be a graph, $X \subseteq V(G)$, and let $A$ be a full
component of $X$ with $|A| > 1$.
Let $p,q \in A$ be any two vertices in different maximal
modules of $G[A]$ (which exist by Lemma~\ref{lem:module-partition}) that are adjacent
(i.e., $pq \in E(G)$). 
Then, for every $x \in X$ either:
\begin{itemize}
    \item There exists an induced $P_4$ in $G$ with one endpoint
    in $x$ and the remaining three vertices in $A$.
    \item The vertex $x$ is adjacent to $p$ or to $q$ (or to both).
    \item The complement of $G[A]$ is disconnected
    and $N(x) \cap A$ consists of some connected components
    of the complement of $G[A]$. 
\end{itemize}
\end{lemma}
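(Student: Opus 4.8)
Fix $x \in X$ and put $Y = N_G(x) \cap A$. If $x$ is adjacent to $p$ or to $q$ we are in the second case, so assume otherwise; then $p,q \notin Y$, hence $Y \ne A$, and $Y \ne \emptyset$ because $A$ is a full component of $X$. I will show that either some induced $P_4$ of $G$ has one endpoint equal to $x$ and its other three vertices in $A$ (first case), or the complement of $G[A]$ is disconnected and $Y$ is a union of connected components of it (third case).

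The first ingredient is what the absence of such a $P_4$ gives. Since $G[A]$ is connected and $\emptyset \ne Y \subsetneq A$, there is an edge $ab$ with $a \in Y$ and $b \in A \setminus Y$, and then $\{x,a,b\}$ induces a $P_3$ with $x$ an endpoint; the impossibility of extending it to a $P_4$ at $b$ yields $N_G(b) \cap A \subseteq N_G(a) \cup Y$. Propagating this along paths of $G[A]$ (which is connected) shows that no vertex of $A \setminus Y$ is anticomplete to $Y$; this is the book-keeping fact used below.

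The second, main ingredient is the modular structure. Let $M_1,\dots,M_r$ (with $r \ge 2$, since the parts are proper) be the maximal proper strong modules of $G[A]$, which partition $A$ by Lemma~\ref{lem:module-partition}; since $G[A]$ is connected, no $M_i$ is anticomplete to all the others. For each $i$ I would show: either the desired $P_4$ exists, or $Y \cap M_i$ is a union of connected components of the complement of $G[M_i]$ --- equivalently, $Y \cap M_i$ is complete in $G$ to $M_i \setminus Y$. If the latter fails, fix $u \in Y \cap M_i$ and $v \in M_i \setminus Y$ with $uv \notin E(G)$; then for any $w \in A \setminus M_i$ that is complete to $M_i$ and nonadjacent to $x$, the set $\{x,u,w,v\}$ induces a $P_4$ with endpoint $x$. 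Since $p,q$ lie in distinct modules, at least one of them, say $q$, lies outside $M_i$; being outside the module $M_i$, $q$ is complete or anticomplete to $M_i$, and if it is complete we are done because $x \not\sim q$. The remaining situation --- $q$ (and symmetrically $p$) anticomplete to $M_i$ --- is where I expect the \emph{main obstacle}: it can occur only when the quotient of $G[A]$ by $M_1,\dots,M_r$ is a prime graph, and one must then still build an induced $P_4$ with endpoint $x$, combining the structure of prime graphs with the book-keeping fact on the $Y$-layers.

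Finally I would combine the two ingredients by cases on the complement of $G[A]$. If it is disconnected, the modules $M_1,\dots,M_r$ are pairwise complete in $G[A]$, so \emph{every} vertex outside a given $M_i$ is complete to it; the obstacle above therefore never arises, each $Y \cap M_i$ is a union of components of the complement of $G[M_i]$, and since the connected components of the complement of $G[A]$ are exactly the components of the complements of the $G[M_i]$, the set $Y = \bigcup_i (Y \cap M_i)$ is a union of connected components of the complement of $G[A]$: the third case. If the complement of $G[A]$ is connected, the third case cannot hold and the quotient is prime, so the claim of the previous paragraph is forced to produce the $P_4$ of the first case.
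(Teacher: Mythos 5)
The paper does not actually prove Lemma~\ref{lem:nei}; it imports it from~\cite{GrzesikKPP22}, so your attempt can only be judged on its own terms. Your first ingredient and your Case~A (complement of $G[A]$ disconnected) are essentially sound: there the maximal proper strong modules are pairwise complete, the vertex $w$ you need can always be taken among $\{p,q\}$, and $x,u,w,v$ is the required induced $P_4$. The genuine gap is exactly the place you flag as the ``main obstacle'' and then dismiss with the closing sentence that in the prime case the per-module claim ``is forced to produce the $P_4$''. That is a non sequitur: when every maximal proper strong module of $G[A]$ is a singleton, your per-module claim holds vacuously for every $i$ and produces nothing.

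Worse, the gap cannot be closed, because the statement as written is false in the prime case for an arbitrary choice of $p,q$. Take $G[A]$ to be the path $a_1a_2a_3a_4$, let $X=\{x\}$ with $N(x)\cap A=\{a_2,a_4\}$, and let $p=a_1$, $q=a_3$ (two distinct singleton maximal proper strong modules). Then $x$ is adjacent to neither $p$ nor $q$; the complement of $G[A]$ is connected, so the third outcome is unavailable; and $G[A\cup\{x\}]$ contains no induced $P_4$ with $x$ as an endpoint (the only candidate $P_3$'s are $xa_2a_1$, $xa_2a_3$, $xa_4a_3$, and none extends). The lemma, and the way it is used in Lemma~\ref{lem:dom}, is salvaged by additionally requiring $p$ and $q$ to be \emph{adjacent}; such a pair exists since $G[A]$ is connected and the modules partition $A$. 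With that hypothesis your own machinery closes the prime case: your first ingredient shows that any two adjacent vertices $v,v'\in A\setminus N(x)$ have the same neighbourhood in $N(x)\cap A$ (otherwise $x,u,v,v'$ is an induced $P_4$ for a suitable $u$), so each connected component of $G[A\setminus N(x)]$ is a module of $G[A]$; the component containing the edge $pq$ meets two distinct maximal proper strong modules, hence must be all of $A$ because the quotient is prime, contradicting $N(x)\cap A\neq\emptyset$. I would rework the write-up around this corrected statement rather than trying to force the argument for arbitrary $p,q$.
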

We remark that if the complement of $G[A]$ is disconnected,
then the connected components of the complement of $G[A]$ are 
exactly the maximal proper strong modules of $G[A]$. 

We deduce the following.
\begin{lemma}\label{lem:dom}
Let $G$ be a $P_6$-free graph of clique number $k \geq 2$,
let $X$ be a minimal separator in $G$, and let $A$ and $B$
be two distinct full sides of $X$. 
Then there exists a set $Q \subseteq A$ of size at most $k$
such that every vertex of $X \setminus N(Q)$ is complete to $B$.
\end{lemma}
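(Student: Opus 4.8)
The plan is to apply Lemma~\ref{lem:nei} to the full component $A$. First, dispose of the degenerate case $|A| = 1$: if $A = \{a\}$ then, since $A$ is a full component of $X$, every vertex of $X$ is adjacent to $a$, so $Q = \{a\}$ works (then $X \setminus N(Q) = \emptyset$ and there is nothing to check). So assume $|A| \ge 2$, fix $p,q \in A$ lying in two distinct maximal proper strong modules of $G[A]$ (these exist by Lemma~\ref{lem:module-partition}), and apply Lemma~\ref{lem:nei} to each $x \in X$.

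The heart of the argument is to show that the first outcome of Lemma~\ref{lem:nei} --- an induced $P_4$ with endpoint $x$ and three vertices in $A$ --- forces $x$ to be complete to $B$. Indeed, if $x$ had a non-neighbour in $B$, then since $B$ is connected and $x$ (being in $X = N_G(B)$) has a neighbour in $B$, a shortest path inside $G[B]$ from a neighbour of $x$ to a non-neighbour of $x$ yields vertices $b,b' \in B$ with $bb' \in E(G)$, $b'x \in E(G)$, and $bx \notin E(G)$. Prepending $b,b'$ to the $P_4$ gives the sequence $b,b',x,a_1,a_2,a_3$, which is an induced $P_6$: every required non-edge comes either from the $P_4$ being induced, from $bx \notin E(G)$ by the choice of $b$, or from the absence of edges between the distinct components $A$ and $B$ of $G - X$. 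This contradicts $P_6$-freeness. Hence, for a vertex $x \in X$ that is not complete to $B$, only the second and third outcomes of Lemma~\ref{lem:nei} can occur, and I will build $Q$ so that $N(Q)$ catches exactly those vertices.

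Finally, split on whether the complement of $G[A]$ is connected. If it is connected, the third outcome never occurs, so every $x$ not complete to $B$ satisfies the second outcome and is adjacent to $p$ or $q$; take $Q = \{p,q\}$, of size $2 \le k$. If the complement of $G[A]$ is disconnected, then by the remark following Lemma~\ref{lem:nei} its connected components $C_1,\dots,C_m$ are exactly the maximal proper strong modules of $G[A]$, and picking one vertex from each component gives a clique of $G$, so $m \le k$. Choose representatives $q_i \in C_i$ arranged so that $p$ and $q$ are among them (possible since $p,q$ lie in two distinct $C_i$'s), and set $Q = \{q_1,\dots,q_m\}$, of size $m \le k$: the second outcome is caught since $p,q \in Q$, and the third outcome is caught since $N(x)\cap A$ is then a nonempty union of some components $C_i$ (nonempty because $A$ is a full component), hence contains some $q_i \in Q$. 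I expect the only real subtlety to be the $P_6$ construction --- in particular extracting the edge $bb'$ with one endpoint adjacent to $x$ and one non-adjacent to $x$ out of the connectivity of $G[B]$ --- together with the bookkeeping that keeps $|Q|$ at $k$ rather than $k+2$, which is precisely why $p$ and $q$ must be re-used as two of the module representatives in the disconnected case.
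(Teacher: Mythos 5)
Your proposal is correct and follows essentially the same route as the paper: apply Lemma~\ref{lem:nei} with $Q$ chosen as either $\{p,q\}$ or one representative per anticomponent of $G[A]$ (a clique, hence at most $k$ vertices), and rule out the first outcome for vertices not complete to $B$ by extending the $P_4$ through two vertices of the connected full component $B$ into an induced $P_6$. The only cosmetic difference is that you argue by the contrapositive (outcome~1 forces completeness to $B$) and make explicit the choice of $p,q$ among the anticomponent representatives, which the paper leaves implicit.
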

\begin{proof}
If $|A| = 1$, take $Q = A$, so $X \setminus N(Q) = \emptyset$ and we are done.
Assume then $|A| > 1$.

If the complement of $G[A]$ is disconnected, take $Q$ to be any
set consisting of one vertex from each connected component
of the complement of $G[A]$. 
Otherwise, take $Q$ of size $2$, consisting of any two vertices
of two different maximal modules of $G[A]$ that are adjacent
(such two modules exist as $|A| > 1$ and $G[A]$ is connected).
Since the clique number of $G$ is $k$, we have $|Q| \leq \max(k,2) = k$.
By Lemma~\ref{lem:nei}, for every $x \in X \setminus N(Q)$ there exists
a $P_4$ with one endpoint in $x$ and the remaining three vertices in $A$;
denote it $P^x$.
If $x$ is not complete to $B$, then, as $G[B]$ is connected and
$B$ is a full component of $X$, there are $y,z \in B$ with $xy,yz \in E(G)$
but $xz \notin E(G)$. Then, $P^x$, prolonged with $y$ and $z$
is an induced $P_6$ in $G$, a contradiction.
\end{proof}

By Lemma~\ref{lem:dom}, we infer that $B$ is a connected module
of $G-N(Q)$. As $|Q| \leq k$, there are at most $n^k$ choices of $Q$.
By Corollary~\ref{cor:modules}, for fixed $Q$, there are at most
$2^k (2n-1)$ choices of $B$. As $X = N_G(B)$, there are at most
$(2n)^k \cdot (2n-1)$ minimal separators in $G$,
as promised.

\acknowledgements
We are grateful to Jakub Gajarsk\'y, Lars Jaffke, Paloma Thom\'e de Lima, Jana Masa\v{r}ikov\'a, U\'everton Souza, and Michał Pilipczuk for inspiring discussions and to Martin Milani\v{c} for bringing our attention to Open problem 3 of ~\cite{MilanicP21}.

\nocite{*}
\bibliographystyle{abbrvnat}
\bibliography{references}
\label{sec:biblio}

\end{document}